\numberwithin{equation}{section} %% Comment out for sequentially-numbered
\numberwithin{figure}{section} %% Comment out for sequentially-numbered
  \theoremstyle{plain}
  \newtheorem{thm}{Theorem}[section]
  \theoremstyle{plain}
  \theoremstyle{plain}
  \newtheorem{prop}[thm]{Proposition}
  \theoremstyle{remark}
  \newtheorem{rem}[thm]{Remark}
  \theoremstyle{remark}
  \theoremstyle{plain}
  \newtheorem{lem}[thm]{Lemma}
  \newtheorem{mydef}{Definition}
\definecolor{asparagus}{rgb}{0.53,0.66, 0.42}
\definecolor{xanadu}{rgb}{0.45, 0.53, 0.47}
\definecolor{alizarin}{rgb}{0.82, 0.1, 0.26}
\definecolor{ao}{rgb}{0.0, 0.5, 0.0}
\definecolor{azure}{rgb}{0.0, 0.5, 1.0}
\definecolor{awesome}{rgb}{1.0, 0.13, 0.32}
\definecolor{greenish}{rgb}{0.1, 0.5, 0.1}
\newcommand{\dd}[1]{\, \displaystyle{#1}}
\newcommand{\deter}{{\rm det}}
\newcommand{\columntwo}[2]{\begin{pmatrix} #1 \\  #2   \end{pmatrix} }
\newcommand{\matrixtwo}[4]{\begin{pmatrix} #1 &  #2 \\ #3  &  #4  \end{pmatrix} }
\newcommand{\im}[1]{\, {\rm Im}\,#1 \,}
\def\<{{\langle }}
\def\>{{\rangle }}
\def\ket#1{|#1\rangle}
\def\<{{\langle }}
\def\>{{\rangle }}
\begin{document}

\title[3-qubit state preparation]{ Preparation of 3-qubit states }

\author{  Oscar Perdomo, Nelson Castaneda and Roger Vogeler }
\date{\today}
\begin{abstract}
We will call a pure qubit state real if all its amplitudes are real numbers. We show that any real 3-qubit state can be prepared using  $R_y(\theta)$  gates and at most four controlled-$Z$ gates, and we conjecture that four is optimal. We also present an algorithm---different from the 2008 algorithm given by Znidaric, Giraud and Georgeot---that prepares any 3-qubit state using local gates and at most three controlled-$Z$ gates.  Videos showing how our method works for two- and three-qubit states can be found at  \url{https://youtu.be/LIdYSs-rE-o} and \url{https://youtu.be/Kne0Vq7gyzQ}.

\end{abstract}

%\subjclass[2000]{53C42, 53A10}
%\maketitle

%\begin{abstract}
%We study some conditions for the helicopter Bell 525 Relentless at the speed 337.33 feet per second, the velocity at the moment it crash.

%under certain helicopter blade design conditions
%\end{abstract}

%\subjclass[2000]{53C42, 53A10}
\maketitle

\section{Introduction}

A natural problem in quantum information is the one of creating a quantum circuit---a sequence of quantum gates---that prepares any pure quantum state. The gates considered in this paper are the local gates and the controlled-$Z$ gates. There are some interesting aspects of the collection of pure qubit states with real amplitudes, which we call  {\it real}  states.  Local gates represented by matrices having all real entries are called {\it real local gates}.

For 2-qubit states,  the distinction of real states is only superficial, since for any state $\ket{\phi}$ there is a local gate $U$ such that $U\ket{\phi}$ has real amplitudes.

For 3-qubit states the situation is different. There exist 3-qubit states $\ket{\phi}$ such that  $U\ket{\phi}$ is not real for any local gate $U$. There also exist pairs of  real 3-qubit states $\ket{\phi_1}$ and $\ket{\phi_2}$ that are equivalent but not {\it real} equivalent; that is, $\ket{\phi_2}=U\ket{\phi_1}$ for some local gate $U$ but not for any real local gate, see \cite{P1}. A canonical representation for 3-qubit states was given in \cite{AA} where it is shown that for any state $\ket{\phi}$ there is a local gate $U$ such that $U\ket{\phi}$ has the  form $\lambda_1\ket{000}+\lambda_2e^{i\theta}\ket{100}+\lambda_3\ket{101}+\lambda_4\ket{110}+\lambda_5\ket{111}$, where $\theta$ and  $\lambda_i$ are all real numbers. For real states, it is known that for every real 3-qubit state $\ket{\phi}$ there is a real local  gate $U$ such that $U\ket{\phi}$ has the  form $\lambda_1\ket{000}+\lambda_2\ket{011}+\lambda_3\ket{101}+\lambda_4\ket{110}+\lambda_5\ket{111}$, see \cite{P2}. 

Any 3-qubit state can be prepared using local gates and at most three controlled-$Z$ gates, see \cite{Z}. This result motivates the following question: Can every real 3-qubit state be prepared using only real local gates and at most three controlled-$Z$ gates? In this paper we provide an algorithm that prepares any real  3-qubit state using real local gates and at most four controlled-$Z$ gates. We strongly suspect that this result is optimal, which would mean that there exist real 3-qubit states that cannot be prepared using only local real gates and fewer than four controlled-$Z$ gates.  

Our algorithm has two special features. First, it is algebraically simple, requiring neither inverse trigonometric functions nor eigenvalues. It involves only the four basic operations and square roots. Second, it extends easily to give a simple algorithm for preparing general 3-qubit states using local gates and at most three controlled-$Z$ gates.

For a general 3-qubit state $\ket{\phi}$ we show explicitly how to find local gates $W_1,\dots , W_8$ for the following circuit

$$
\Qcircuit @C=1em @R=.7em { 
                   &  & \gate{W_1} & \qw       & \qw            &\ctrl{2}  & \ctrl{1}  &\gate{W_6} & \qw& \\ 
\lstick{\ket{\phi}}& & \gate{W_2} & \ctrl{1}  & \gate{W_4}  & \qw   &  \ctrl{-1}  &\gate{W_7} & \qw&\rstick{\ket{000}} \\
               &       & \gate{W_3} & \ctrl{-1} & \gate{W_5}   &\ctrl{-2}  & \qw  &\gate{W_8} & \qw &
                      }$$
                    
where the indicated controlled-$Z$ gates can be removed in certain cases, as will be made clear in our explanation.

For a real 3-qubit state 
$$\ket{\phi}=w_0\ket{000}+w_1\ket{001}+w_2\ket{010}+w_3\ket{011}+w_4\ket{100}+w_5\ket{101}+w_6\ket{110}+w_7\ket{111}$$
let us define
$$\Delta(\ket{\phi})=(w_0 w_7-w_1 w_6-w_2 w_5+w_3 w_4)^2-4 (w_1 w_2-w_0 w_3) (w_5 w_6-w_4 w_7)$$

We show that if $\Delta(\ket{\phi})\ge 0$, then the circuit with three controlled-$Z$ gates shown above prepares the state $\ket{\phi}$ in ``real fashion'' because all $W_i$ reduce to real local gates. When $\Delta(\ket{\phi})<0$, then our method requires a fourth controlled-$Z$ and we find real local gates $W_0,W_2,\dots, W_8$ for the following circuit

$$
\Qcircuit @C=1em @R=.7em { 
                         & &   \qw            &  \qw &  \qw & \qw       & \qw            &\ctrl{2}  & \ctrl{1}  &\gate{W_6} & \qw& \\ 
\lstick{\ket{\phi}}& &  \qw            &\ctrl{1}     &\gate{W_2} & \ctrl{1}  & \gate{W_4}  & \qw   &  \ctrl{-1}  &\gate{W_7} & \qw&\rstick{\ket{000}} \\
                         & &   \gate{W_0} &\ctrl{-1}    &\gate{W_3} & \ctrl{-1} & \gate{W_5}   &\ctrl{-2}  & \qw  &\gate{W_8} & \qw &
                      }$$

Our primary interest was the preparation of real $3-$qubits but we were able to adapt the method we developed   for the preparation of general $3-$qubits. For simplicity we shall explain first the algorithm for the general 3-qubit state

The justification for the algorithms is based on standard linear algebra of 2-by-2 matrices. A key idea in our study is borrowed from the paper \cite{AA} where the authors introduce two 2-by-2 matrices for any 3-qubit state. We denote these matrices $A$ and $B$, defined as follows:

\begin{eqnarray*}
\xi_0\ket{000}+\xi_1\ket{001}+\dots+\xi_7\ket{111} \equiv 
\ket{0}\, \begin{pmatrix} \xi_0 &\xi_1 \\ \xi_2 & \xi_3 \end{pmatrix} + \ket{1}\, \begin{pmatrix} \xi_4 &\xi_5 \\ \xi_6 & \xi_7 \end{pmatrix}  =\ket{0}A+\ket{1}B
\end{eqnarray*}

In section 2 we provide a collection of basic linear algebra lemmas. These explain, for $\ket{\phi}=\ket{0}A+\ket{1}B$, how the matrices $A$ and $B$ change when a gate $M$ is applied to $\ket{\phi}$ and how to select a local gate $M$ so that the resulting matrices for $M\ket{\phi}$ have some desired property. Section 3 uses  linear algebra  to produce  explicit formulas for the matrices $W_i$ in the circuits. % and also provides some concrete examples.

%%
%% LINEAR ALGEBRA SECTION
%%

\section{Linear algebra lemmas} 

This section presents a series of lemmas that we rely on to build circuits that prepare 3-qubit states. Let us start with some definitions.

\begin{mydef} The controlled-$Z$ gate acting on two qubit states is represented by the matrix 

$$cz=\left(
\begin{array}{cccc}
 1 & 0 & 0 & 0 \\
 0 & 1 & 0 & 0 \\
 0 & 0 & 1 & 0 \\
 0 & 0 & 0 & -1 \\
\end{array}
\right)$$

For three qubits  we will denote by $cz_{ij}$ (with $i\ne j$) the controlled-$Z$ gate acting on  qubits $i$ and $j$, where the qubits are labelled  $0$,  $1$ and  $2$ from right to left. For example,  $\ket{011}$ means that qubit $0$ is in  state $\ket{1}$, qubit $1$ in in  state $\ket{1}$ and qubit 2 is in  state $\ket{0}$. Using $e_1=(1,0,0,\dots, 0), e_2=(0,1,0,\dots, 0)$, etc. as colmuns, we have  $cz_{01}=[e_1,e_2,e_3,-e_4,e_5,e_6,e_7,-e_8]$, $cz_{02}=[e_1,e_2,e_3,e_4,e_5,-e_6,e_7-e_8]$ and $cz_{12}=[e_1,e_2,e_3,e_4,e_5,e_6,-e_7-e_8]$.
\end{mydef}

\begin{mydef}

(i) For any 3-qubit state

%$$\ket{\phi}=t_{000}\ket{000}+t_{001}\ket{001}+t_{010}\ket{010}+t_{011}\ket{011}+ t_{100}\ket{100}+t_{101}\ket{101}+t_{110}\ket{110}+t_{111}\ket{111}\, ,$$

$$\ket{\phi}=t_{000}\ket{000}+t_{001}\ket{001}+\dots+t_{110}\ket{110}+t_{111}\ket{111}\, ,$$

we define  $2\times 2$ matrices $T_0$ and $T_1$ by

$$T_0=\begin{pmatrix} t_{000}&t_{001}\\t_{010}&t_{011}\end{pmatrix}\quad \hbox{and}\quad T_1=\begin{pmatrix} t_{100}&t_{101}\\t_{110}&t_{111}\end{pmatrix}$$

and we write $\ket{\phi}=\ket{0}T_0+\ket{1}T_1$.

(ii) For any 2-qubit state $\ket{\phi}=t_{00}\ket{00}+t_{01}\ket{01}+t_{10}\ket{10}+t_{11}\ket{11}$, we define the matrix $T(\ket{\phi})=\begin{pmatrix}t_{00}&t_{01}\\t_{10}&t_{11}\end{pmatrix}$

\end{mydef}

\begin{mydef} \label{m} (i) For any pair of complex numbers $(x,y)\ne(0,0)$, we define the unitary matrix

$$U(x,y)=\frac{1}{\sqrt{|x|^2+|y|^2}}\, \begin{pmatrix} x & y \\ -\bar{y} & \bar{x} \end{pmatrix} $$ 

where $\bar{z}$ denotes the conjugate of the complex number $z$. 

(ii) For any nonsingular matrix $A= \begin{pmatrix}a&b\\c&d\end{pmatrix}$ we define the  unitary matrix $R_1(A)=U(x,y)$ where

	$$\begin{cases} x=d-bk\\ y=\bar{c}-\bar{a} \bar{k}\end{cases} \hbox{with}\quad k=\begin{cases} \sqrt{\frac{|c|^2+|d|^2}{|a|^2+|b|^2}}                      &\,   \hbox{ if }\,  \beta= a\bar{c}+b\bar{d} =0 \\
	     -  \sqrt{\frac{|c|^2+|d|^2}{|\beta|^2(|a|^2+|b|^2)}}    \, \bar{\beta}&  \, \hbox{ if } \, \beta= a\bar{c}+b\bar{d} \ne 0 
	        \end{cases}$$

	 (iii) For a nonzero matrix  $A= \begin{pmatrix}a&b\\c&d\end{pmatrix}$  with  determinant zero, we define $R_2(A)=U(x,y)$ where either

	$$\begin{cases} x=b\\ y=\bar{a}- \bar{k} \end{cases} \hbox{with}\quad    k=\begin{cases} \sqrt{|a|^2+|b|^2}                      &\,   \hbox{ if }\,  a=0 \\
	     -  \sqrt{\frac{|a|^2+|b|^2}{|a|^2}}    \, a&  \, \hbox{ if } \, a \ne 0 
	        \end{cases}$$
	        
if the first row of $A$ is not the zero vector or 

		$$\begin{cases} x=d\\ y=\bar{c}- \bar{k} \end{cases} \hbox{with}\quad    k=\begin{cases} \sqrt{|c|^2+|d|^2}                      &\,   \hbox{ if }\,  c=0 \\
	     -  \sqrt{\frac{|c|^2+|d|^2}{|c|^2}}    \, a&  \, \hbox{ if } \, c \ne 0 
	        \end{cases}$$
		
if the first row of $A$ is the zero vector. 	 

(iv) For a nonzero matrix  $A= \begin{pmatrix}a&b\\c&d\end{pmatrix}$ with determinant zero, we choose $(v_1,v_2)$ such that both columns of $A$  are multiples of $(v_1,v_2)$, and define   $L_1(A)=U(\bar{v_1},\bar{v_2})$.

(v) For a nonzero matrix  $A= \begin{pmatrix}a&b\\0&0\end{pmatrix}$  we define $R_3(A)=U(\bar{a},-b)$.

\end{mydef}

 \begin{rem} If the entries of matrix $A$ in the previous lemma are real numbers, then $U(x,y)$, $R_i(A)$ and $L_1(A)$ either have real entries or can be chosen to have real entries.
 \end{rem}

The following lemma shows how matrices $T_0$ and $T_1$  change when we apply either a local gate or a controlled-$Z$ gate to the state $\ket{\phi}=\ket{0} T_0+\ket{1}T_1$. The proof is a direct computation.

\begin{lem}\label{p3q} Let  $U=\begin{pmatrix} u_{00}&u_{01}\\u_{10}&u_{11}\end{pmatrix}$ be a unitary matrix. Let $I$ denote the $2\times 2$ identity matrix and $Z=\begin{pmatrix}1&0\\0&-1  \end{pmatrix}$. If \,$\ket{\phi}=\ket{0}T_0+\ket{1}T_1$, then 

\begin{itemize}
\item $U\otimes I\otimes I\,\ket{\phi}= \ket{0}(u_{00}T_0+u_{01}T_1)+\ket{1}(u_{10}T_0+u_{11}T_1)$
\vskip.2cm
\item $I\otimes U\otimes I\,\ket{\phi}= \ket{0}(UT_0)+\ket{1}(UT_1)$
\vskip.2cm
\item $I\otimes I\otimes U\,\ket{\phi}= \ket{0}(T_0U^T)+\ket{1}(T_1U^T)$
\vskip.2cm

\item $cz_{01} \ket{\phi}= \ket{0}\begin{pmatrix} t_{000}&t_{001}\\t_{010}&-t_{011}\end{pmatrix}+\ket{1}\begin{pmatrix} t_{100}&t_{101}\\t_{110}&-t_{111}\end{pmatrix}$

\vskip.2cm
\item $cz_{02} \ket{\phi}= \ket{0}T_0+\ket{1}T_1Z$

\vskip.2cm
\item $cz_{12} \ket{\phi}= \ket{0}T_0+\ket{1}ZT_1$
\end{itemize}

\end{lem}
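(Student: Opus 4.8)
The plan is to verify each of the six displayed identities by a direct coordinate computation, using the bijection between the eight amplitudes $t_{abc}$ of $\ket{\phi}$ and the pair of matrices $T_0,T_1$, where $(T_a)_{bc}=t_{abc}$ and the bit string $abc$ labels qubit $2$, qubit $1$, qubit $0$ respectively. First I would record the elementary fact about how a $2\times 2$ unitary $U=(u_{ij})$ acts on a single basis qubit: $U\ket{0}=u_{00}\ket{0}+u_{10}\ket{1}$ and $U\ket{1}=u_{01}\ket{0}+u_{11}\ket{1}$; equivalently, the amplitude of $\ket{i}$ in $U\ket{j}$ is $u_{ij}$. Everything else follows by substituting this and collecting terms.

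For the first identity, $U\otimes I\otimes I$ acts only on qubit $2$, so $(U\otimes I\otimes I)\ket{\phi}=(U\ket{0})T_0+(U\ket{1})T_1$; substituting the formulas above and regrouping the $\ket{0}$ and $\ket{1}$ terms gives $\ket{0}(u_{00}T_0+u_{01}T_1)+\ket{1}(u_{10}T_0+u_{11}T_1)$. For $I\otimes U\otimes I$, the gate acts on qubit $1$, i.e. on the row index $b$ of each block: the new amplitude at position $(b',c)$ in block $a$ is $\sum_b u_{b'b}(T_a)_{bc}=(UT_a)_{b'c}$, so each $T_a$ is replaced by $UT_a$. For $I\otimes I\otimes U$, the gate acts on qubit $0$, i.e. on the column index $c$: the new amplitude at $(b,c')$ is $\sum_c u_{c'c}(T_a)_{bc}=(T_aU^{T})_{bc'}$, which is precisely why the transpose appears and each $T_a$ becomes $T_aU^{T}$.

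For the three controlled-$Z$ gates I would simply observe that $cz_{ij}$ multiplies by $-1$ exactly those amplitudes $t_{abc}$ whose bits in positions $i$ and $j$ are both $1$, and fixes all others. Hence $cz_{01}$ negates $t_{011}$ and $t_{111}$, which are the $(1,1)$ entries of $T_0$ and $T_1$; $cz_{02}$ negates $t_{101}$ and $t_{111}$, which form the second column of $T_1$, i.e. $T_1\mapsto T_1Z$; and $cz_{12}$ negates $t_{110}$ and $t_{111}$, the second row of $T_1$, i.e. $T_1\mapsto ZT_1$. In the last two cases the $\ket{0}$ block is untouched because bit $2$ is $0$ there. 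No step is genuinely hard; the only place to be careful is the bookkeeping of index conventions---in particular remembering that ``qubit $0$'' corresponds to the column index, which produces $U^{T}$ rather than $U$ in the third item, and correctly distinguishing left- from right-multiplication by $Z$ in the $cz_{02}$ versus $cz_{12}$ cases.
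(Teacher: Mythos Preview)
Your proposal is correct and is exactly the ``direct computation'' that the paper invokes without detail; you have simply written out the coordinate check that the paper leaves to the reader, with the index conventions handled correctly.
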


The following lemma describes the form of matrices $T_0$ and $T_1$ when the state $\ket{\phi}=\ket{0} T_0+\ket{1}T_1$ can be written as the tensor product of two states.

\begin{lem}   \label{rightfactor} 
A state $\ket{\phi}=\ket{0}T_0+\ket{1}T_1$  can be written as 

$$(b_0\ket{00}+b_1\ket{01}+b_2\ket{10}+b_3\ket{11})\otimes (a_0\ket{0}+a_1\ket{1})$$

 if and only if the rows of $T_0$ and $T_1$ are all multiples of the vector $(a_0,a_1)$.
\end{lem}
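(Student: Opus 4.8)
The plan is to prove both implications by direct expansion; the only genuine care needed is the bookkeeping of which tensor factor corresponds to which qubit under the right-to-left labelling.

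First I would multiply out the candidate product. Since qubit $0$ is the rightmost, the factor $a_0\ket{0}+a_1\ket{1}$ is the state of qubit $0$, while $b_0\ket{00}+b_1\ket{01}+b_2\ket{10}+b_3\ket{11}$ is the state of qubits $2$ and $1$. Thus the amplitude of $\ket{q_2 q_1 q_0}$ in the product equals (coefficient of $\ket{q_2 q_1}$ in the first factor)$\cdot$(coefficient of $\ket{q_0}$ in the second). Reading off the eight amplitudes $t_{q_2 q_1 q_0}$ and substituting into the definitions of $T_0$ and $T_1$ gives
$$T_0=\begin{pmatrix} b_0 a_0 & b_0 a_1\\ b_1 a_0 & b_1 a_1\end{pmatrix}=\begin{pmatrix} b_0\\ b_1\end{pmatrix}(a_0,a_1),\qquad T_1=\begin{pmatrix} b_2 a_0 & b_2 a_1\\ b_3 a_0 & b_3 a_1\end{pmatrix}=\begin{pmatrix} b_2\\ b_3\end{pmatrix}(a_0,a_1).$$
Hence every row of $T_0$ and of $T_1$ is a scalar multiple of $(a_0,a_1)$, with multipliers $b_0,b_1$ for $T_0$ and $b_2,b_3$ for $T_1$; this is the ``only if'' direction.

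For the converse I would run the computation backwards. Assume first $(a_0,a_1)\neq(0,0)$ (if $(a_0,a_1)=(0,0)$ the hypothesis forces $T_0=T_1=0$, so $\ket{\phi}=0$ and the factorization is trivial). Since each of the two rows of $T_0$ and each of the two rows of $T_1$ is a multiple of $(a_0,a_1)$, define $b_0,b_1$ to be the (unique) multipliers for the two rows of $T_0$ and $b_2,b_3$ those for the two rows of $T_1$. Then the displayed identities hold, which is precisely the statement that $t_{q_2 q_1 q_0}$ factors as the appropriate product of a $b$ and an $a$; recombining the eight terms reconstitutes $\ket{\phi}$ as $(b_0\ket{00}+b_1\ket{01}+b_2\ket{10}+b_3\ket{11})\otimes(a_0\ket{0}+a_1\ket{1})$.

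I do not anticipate any real obstacle: the entire content is the observation that ``$\ket{\phi}$ factors with qubit $0$ pulled off on the right'' is equivalent to the two $2\times 2$ blocks $T_0$ and $T_1$ both having all rows in the span of $(a_0,a_1)$. The only point to watch is the index convention---because the qubits are numbered right to left, it is the \emph{columns} of $T_0$ and $T_1$ that track qubit $0$, so the factored-off vector $(a_0,a_1)$ appears as the common right factor of the rows rather than of the columns.
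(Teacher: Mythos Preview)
Your proposal is correct and matches the paper's approach: the paper simply declares the proof ``straightforward'' without writing it out, and your direct expansion of the tensor product---yielding $T_0=\begin{pmatrix}b_0\\ b_1\end{pmatrix}(a_0,a_1)$ and $T_1=\begin{pmatrix}b_2\\ b_3\end{pmatrix}(a_0,a_1)$---is exactly the computation one expects. Your handling of the index convention and the degenerate case $(a_0,a_1)=(0,0)$ is fine.
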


Once again, the proof is straightforward.

\begin{lem}  \label{proportionalrows} 
	Let $   \dd{ A  = \matrixtwo{a}{b}{c}{d}} $ be a nonsingular complex matrix.  With $R_1(A)$ and $k$ as given  in Definition \ref{m},  the matrix $  W  = \begin{pmatrix} w_{11} &w_{12} \\ w_{21} & w_{22} \end{pmatrix}  =  A R_1(A) $  satisfies 
		\begin{equation}  \label{condition}  w_{21} = k w_{11}    \qquad  \text{and} \qquad  w_{22}  =  -kw_{12} \end{equation}
\end{lem}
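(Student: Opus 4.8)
The plan is to compute the product $W = A R_1(A)$ directly, using the explicit form of $R_1(A) = U(x,y)$, and verify that the stated relations between the entries hold. Write $R_1(A) = \frac{1}{\sqrt{|x|^2+|y|^2}}\begin{pmatrix} x & y \\ -\bar y & \bar x\end{pmatrix}$ with $x = d - bk$ and $y = \bar c - \bar a \bar k$ as in Definition \ref{m}. Then, ignoring the positive normalizing scalar (which cancels in the ratios that define condition \eqref{condition}), the first column of $W$ is $\begin{pmatrix} ax - b\bar y \\ cx - d\bar y\end{pmatrix}$ and the second column is $\begin{pmatrix} ay + b\bar x \\ cy + d\bar x\end{pmatrix}$.

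First I would handle the relation $w_{21} = k w_{11}$, i.e. $cx - d\bar y = k(ax - b\bar y)$. Substituting $x = d - bk$ and $\bar y = c - ak$ and expanding both sides, one checks that all terms match identically — this is a pure algebraic identity in $a,b,c,d,k$ and does not even use the specific value of $k$. Next I would treat $w_{22} = -k w_{12}$, i.e. $cy + d\bar x = -k(ay + b\bar x)$. Here $\bar x = \bar d - \bar b \bar k$ and $y = \bar c - \bar a \bar k$, so after substitution the identity becomes a relation that genuinely constrains $k$: it will reduce to something like $|a|^2\bar k^2 + (\text{involving }\beta)\,\bar k + \cdots = 0$, and the two cases in the definition of $k$ (according to whether $\beta = a\bar c + b\bar d$ vanishes) are precisely the two regimes one must separate to solve it. In the case $\beta = 0$ the cross term drops and $k$ is chosen so that $|k|^2 = \frac{|c|^2+|d|^2}{|a|^2+|b|^2}$ makes the remaining terms cancel; in the case $\beta \neq 0$ the choice $k = -\sqrt{\tfrac{|c|^2+|d|^2}{|\beta|^2(|a|^2+|b|^2)}}\,\bar\beta$ is exactly what is needed so that the $\bar k$-linear and $\bar k^2$ and constant terms cancel, using $k\beta \in \mathbb{R}$.

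The main obstacle is the second relation, specifically organizing the case split cleanly: one has to recognize that the quantity $\beta = a\bar c + b\bar d$ is the natural "inner product of the rows" whose vanishing or not dictates the form of $k$, and then verify that in each case the (generally complex) quadratic equation for $\bar k$ is satisfied by the prescribed $k$. It may help to first record the auxiliary identities $a\bar a + b\bar b = |a|^2 + |b|^2$, $c\bar c + d\bar d = |c|^2+|d|^2$, and $a\bar c + b\bar d = \beta$, and to note that $A$ nonsingular guarantees $(x,y) \neq (0,0)$ so that $U(x,y)$ is well-defined — indeed $x = d - bk$ and $y = \bar c - \bar a\bar k$ cannot both vanish, since that would force the rows of $A$ to be proportional. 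Once the two scalar identities \eqref{condition} are verified, the lemma is proved, since multiplying $R_1(A)$ by its positive normalizing constant rescales both $w_{11},w_{21}$ (resp. $w_{12},w_{22}$) by the same factor and hence preserves the two proportionality relations.
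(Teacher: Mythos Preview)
Your approach is correct and is essentially the same computation as the paper's, just organized differently. The paper packages both relations in \eqref{condition} as a single homogeneous linear system $C\begin{pmatrix} x\\ \bar y\end{pmatrix}=0$ with $C=\begin{pmatrix} c-ka & -d+kb\\ \bar d+\bar k\bar b & \bar c+\bar k\bar a\end{pmatrix}$, then shows that the prescribed $k$ makes $\det C=0$ and observes that the chosen $(x,\bar y)$ lies in the kernel. Your direct computation unpacks exactly this: the first row of $C$ annihilates $(x,\bar y)$ identically (your observation that $w_{21}=kw_{11}$ holds for \emph{every} $k$), while the second row is the genuine constraint on $k$. One small correction: when you expand $w_{22}+kw_{12}$ you will get $(|c|^2+|d|^2)-\bar\beta\,\bar k+\beta k-(|a|^2+|b|^2)|k|^2$, so the highest-order term is $|k|^2=k\bar k$, not $\bar k^2$ as you wrote; the vanishing then splits cleanly into the real part $|k|^2=\dfrac{|c|^2+|d|^2}{|a|^2+|b|^2}$ and the imaginary part $\mathrm{Im}(k\beta)=0$, which is precisely what the two cases in Definition~\ref{m} arrange. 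With that fix your argument goes through, and your remark that $(x,y)\neq(0,0)$ follows from nonsingularity of $A$ is exactly the missing justification behind the paper's ``clearly our choice is one of these solutions.''
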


\begin{proof}  Since $det(A)\ne 0$ then $(x,y)\ne(0,0)$.    A direct computation shows that Equation \eqref{condition} can be written in matrix form in the form $  C \columntwo{x}{\bar{y}}  =  \columntwo{0}{0}$ where $ C $ is the matrix  
	
	\[  C =  \matrixtwo{c - ka}{-d + kb}{\bar{d} + \bar{k} \bar{b}}{\bar{c} + \bar{k}{\bar{a}}}.   \]
	
Observe that the determinant of $C$ can be written in the form 

\begin{equation} \label{deterC}  \deter{C}  =  \alpha |k|^2  + \beta k  +  \gamma \bar{k}  +  \delta,  \end{equation}  

where 

\[  \alpha  =  - ( |a|^2 + |b|^2  ),  \quad \beta =  - ( a \bar{c} + b \bar{d}),    \quad  \gamma =   \bar{a}c + \bar{b} d,   \quad \delta  =   |c|^2 + |d|^2    \]

We see that    $  \gamma   =  - \bar{\beta}, $ \  hence  

\[ \beta k  +  \gamma \bar{k}  =  \beta k  -  \overline{\beta k} =    - 2 \im{(\beta  k)}  \cdot   i \]   

which is a purely imaginary number. It follows that our choice for $k$ in the statement of the lemma makes the determinant of $C$ vanish. Therefore the equation $C \columntwo{x}{\bar{y}}  =  \columntwo{0}{0}$ has infinitely many solutions. Clearly our choice $x$ and $y$ is one of these solutions.

   \end{proof}

 \begin{lem}   \label{twomatrices}
 Let   $ \dd{  A  =  \matrixtwo{\alpha}{0}{0}{0} }$ and $ \dd{ B  =  \matrixtwo{b_{11}}{b_{12}}{b_{21}}{b_{22}} }$  be nonzero matrices  with  determinant zero. If $R_2(B)$ is defined as in Definition \ref{m} then all four rows of  the matrices $ A R_2(B) $ and $ B R_2(B) Z $ are multiples of a single row.
        
 \end{lem}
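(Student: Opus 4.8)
The plan is to reduce the statement to Lemma \ref{proportionalrows} by understanding what $R_2(B)$ does to a general rank-one matrix, and then separately handling the degenerate matrix $A$ by direct inspection. First I would recall that $B$ is nonzero with $\det B = 0$, so its two rows are proportional; exactly one of two cases holds: either the first row $(b_{11},b_{12})$ is nonzero, or it is the zero row (in which case $(b_{21},b_{22})\ne 0$). Since $R_2(B)$ is defined by two parallel formulas covering precisely these two cases, it suffices to treat one of them, say the first row nonzero, the other being identical after swapping the roles of the rows.

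In the first case, $R_2(B) = U(x,y)$ with $x = b_{12}$, $y = \bar b_{11} - \bar k$, and $k$ chosen exactly as in the $\beta = 0$ sub-case of $R_1$ applied to the (singular) matrix obtained from the first row. The key computation is to show that $B R_2(B)$ has its second column equal to zero, equivalently that the single (up to scaling) row of $BR_2(B)$ is of the form $(*,0)$: indeed, applying $U(b_{12}, \bar b_{11}-\bar k)$ to the row $(b_{11},b_{12})$, the second entry becomes $\frac{1}{\sqrt{\cdots}}\bigl(b_{11}(b_{11}-k) - b_{12}\bar b_{12}\bigr)$ — wait, one must be careful with conjugates: $U(x,y)$ acts on a row vector $(p,q)$ as $(p,q)\begin{pmatrix} x & y\\ -\bar y & \bar x\end{pmatrix} = (px - q\bar y,\ py + q\bar x)$. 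Plugging in $p=b_{11}$, $q=b_{12}$, $x=b_{12}$, $y=\bar b_{11}-\bar k$ gives first entry $b_{11}b_{12} - b_{12}(b_{11}-k) = k b_{12}$ and second entry $b_{11}(\bar b_{11}-\bar k) + b_{12}\overline{b_{12}} = |b_{11}|^2 - b_{11}\bar k + |b_{12}|^2$. The choice of $k$ — namely $|k| = \sqrt{|b_{11}|^2+|b_{12}|^2}$ when $b_{11}=0$, or $k = -\frac{\sqrt{|b_{11}|^2+|b_{12}|^2}}{|b_{11}|}\,b_{11}$ otherwise — is precisely what forces $|b_{11}|^2+|b_{12}|^2 = b_{11}\bar k$, hence the second entry vanishes. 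Since the second row of $B$ is a scalar multiple of the first, the same holds for it, so $BR_2(B) = \begin{pmatrix} kb_{12} & 0 \\ \mu k b_{12} & 0\end{pmatrix}$ for the proportionality constant $\mu$; after multiplying on the right by $Z$, $BR_2(B)Z$ still has both rows multiples of $(1,0)$.

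Next I would deal with $A R_2(B)$. Writing $R_2(B) = \frac{1}{\sqrt{|x|^2+|y|^2}}\begin{pmatrix} x & y \\ -\bar y & \bar x\end{pmatrix}$, the product $AR_2(B)$ with $A = \begin{pmatrix}\alpha & 0\\0&0\end{pmatrix}$ is $\frac{\alpha}{\sqrt{|x|^2+|y|^2}}\begin{pmatrix} x & y \\ 0 & 0\end{pmatrix}$ — its nonzero row is a multiple of $(x,y)$, and the bottom row is zero. So both rows of $AR_2(B)$ are multiples of $(x,y)$, trivially. Combining: all four rows of $AR_2(B)$ and $BR_2(B)Z$ will be multiples of a single row provided the row $(x,y)$ of $AR_2(B)$ is itself a multiple of $(1,0)$, i.e. provided $y = 0$. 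Here is the real content: with the first row of $B$ nonzero, $y = \bar b_{11} - \bar k$, and this is generally \emph{not} zero — so the single common row is actually $(x,y) = (b_{12}, \bar b_{11}-\bar k)$ (up to scaling), and one must check that the nonzero rows of $BR_2(B)Z$, namely multiples of $(kb_{12}, 0)$, are multiples of $(b_{12},\bar b_{11}-\bar k)$; this forces $\bar b_{11} - \bar k = 0$ unless $b_{12}=0$. I expect the resolution is that the lemma's hypothesis is used together with $\det B = 0$ in a way I've so far suppressed, or that in the intended application $A$ and $B$ arise as $T_0, T_1$ from the \emph{same} preceding reduction step so that $b_{11}=0$ automatically (making $y = -\bar k$ with $|k|^2 = |b_{12}|^2$, giving $BR_2(B)$'s row $= (kb_{12},0)$ and $AR_2(B)$'s row $\propto (b_{12}, -\bar k)$ — still not obviously parallel). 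The main obstacle, then, is pinning down exactly which common row all four rows are multiples of and reconciling the form of $AR_2(B)$ with that of $BR_2(B)Z$; I would resolve it by carefully tracking the two cases in the definition of $R_2$ and exploiting that $\det B = 0$ makes $b_{11}b_{22} = b_{12}b_{21}$, which should collapse the apparent discrepancy — and if it does not, I would suspect the precise claim is that the rows of $AR_2(B)$ are multiples of one row and the rows of $BR_2(B)Z$ of (possibly) another, with the actual statement being weaker than "a single row" or carrying an unstated normalization, in which case the honest fix is to state and prove exactly what the downstream circuit construction needs.
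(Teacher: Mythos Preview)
Your argument derails at a concrete computational error. You assert that the choice of $k$ forces $|b_{11}|^2+|b_{12}|^2=b_{11}\bar k$, so that the second entry of the first row of $BR_2(B)$ vanishes and hence all rows of $BR_2(B)Z$ are multiples of $(1,0)$. This is false: when $b_{11}\ne 0$ the definition gives $k=-\sqrt{(|b_{11}|^2+|b_{12}|^2)/|b_{11}|^2}\,b_{11}$, so $b_{11}\bar k=-|b_{11}|\sqrt{|b_{11}|^2+|b_{12}|^2}$ is a \emph{negative} real number, never equal to the positive quantity $|b_{11}|^2+|b_{12}|^2$. The second column of $BR_2(B)$ does not vanish, and this wrong conclusion is exactly what produces the ``discrepancy'' you struggle with at the end; there is no missing hypothesis and the statement is not weaker than written.

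The correct target for the common row is $(x,y)=(b_{12},\bar b_{11}-\bar k)$, the first row of $AR_2(B)$, not $(1,0)$. Keeping your own computation but retaining the second entry and applying $Z$, the first row of $BR_2(B)Z$ is (up to normalization)
\[
\bigl(b_{11}x-b_{12}\bar y,\; -(b_{11}y+b_{12}\bar x)\bigr)=\bigl(kb_{12},\; -(|b_{11}|^2-b_{11}\bar k+|b_{12}|^2)\bigr).
\]
Asking that this equal $k(x,y)=(kb_{12},\,k\bar b_{11}-|k|^2)$ reduces to $|k|^2=|b_{11}|^2+|b_{12}|^2$ together with $k\bar b_{11}\in\R$, both of which are built into the definition of $k$. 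The paper packages the same verification by writing the proportionality condition ``first row of $BU(x,y)Z$ equals $k(x,y)$'' as a linear system $C\binom{x}{\bar y}=0$ with
\[
C=\begin{pmatrix} b_{11}-k & -b_{12}\\ \bar b_{12} & \bar b_{11}+\bar k\end{pmatrix},\qquad \det C=|b_{11}|^2+|b_{12}|^2-|k|^2-2i\,\mathrm{Im}(k\bar b_{11}),
\]
observes that the chosen $k$ kills $\det C$, and checks that the specified $(x,y)$ lies in the kernel. Once your arithmetic is fixed, your direct computation and the paper's linear-system argument are the same proof.
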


\begin{proof}   Let us assume that the first row of $B$ is not the zero vector. Since $\det(B)= 0$, the two rows of  matrix $ B U(x,y) Z $ are multiples of each other. Therefore it is enough to show that  our choice of $U(x,y)$ makes the first row of $BU(x,y)Z$ a multiple of the first row of $AU(x,y)$. The first row of  $AU(x,y)$ is a multiple of the vector $(x,y)$ and the first row of $BU(x,y)Z$ is a multiple of vector $(b_{11} x-b_{12}\bar{y},-b_{11}y-b_{12} \bar{x})$. The equation  $(b_{11} x-b_{12}\bar{y},-b_{11}y-b_{12} \bar{x})=k(x,y)$ reduces to the equation  $ C \columntwo{x}{\bar{y}}  =  \columntwo{0}{0}$ where $ C $ is the matrix  
	
	\[  C =  \matrixtwo{b_{11} - k}{-b_{12}}{\bar{b_{12} }}{\bar{b_{11}} + \bar{k}}.   \]
	
Observe that the determinant of $C$ can be written in the form 

\begin{equation*} \label{deterC}  \deter{(C)}  =  -|k|^2  -  \bar{b}_{11} k  +  b_{11} \bar{k}  +  |b_{11}|^2 + |b_{12}|^2 =|b_{11}|^2 + |b_{12}|^2-|k|^2-2i \im{(k\bar{b}_{11})},  \end{equation*}  

If follows that our choice for $k$ in the statement of the lemma makes the determinant of $C$ vanish. Therefore the equation $C \columntwo{x}{\bar{y}}  =  \columntwo{0}{0}$ has infinitely many solutions. Clearly our choice $x$ and $y$ is one of these solutions. The case when the first row of $B$ is the zero  vector is similar.

   \end{proof}
   
\section{The circuits}

In this section we  explicitly show a circuit that takes any 3-qubit state into  $\ket{000}$ using local gates and at most three controlled-$Z$ gates. We  also explicitly show a circuit that takes any real 3-qubit state into $\ket{000}$ using real local gates and at most four controlled-$Z$ gates.

\begin{rem}\label{iba} Here is an overview of the construction of the circuit that takes a given 3-qubit state into the state $\ket{000}$:
\begin{enumerate}
\item We start with a 3-qubit state $\ket{\varphi_0}=\ket{0} A_0+\ket{1} B_0$ where $A_0$ and $B_0$ are $2\times 2$ matrices.
\item We use a local gate to change $\ket{\varphi_0}$ to $\ket{\varphi_1}=\ket{0} A_1+\ket{1} B_1$ with $\det{(A_1)}=0$. This step uses a trick from \cite{AA}.
\item We use local gates and a controlled-$Z$ gate (if needed) to change  $\ket{\varphi_1}$ to $\ket{\varphi_2}=\ket{0} A_2+\ket{1} B_2$ with  $\det{(A_2)}=0$ and  $\det{(B_2)}=0$. This uses Lemma \ref{proportionalrows}.
\item We use local gates and a controlled-$Z$ gate (if needed) to change $\ket{\varphi_2}$ to $\ket{\varphi_3}=\ket{0} A_3+\ket{1} B_3$ with all the rows of the two matrices $A_3$ and $B_3$ multiples of each other. This uses Lemma \ref{twomatrices}. 
\item By Lemma \ref{rightfactor} we see that 3-qubit state $\ket{\varphi_3}$ is the tensor product of a 2-qubit state and a 1-qubit state. We use a controlled-$Z$ gate (if needed) to unentangle the 2-qubit state and reach the state $\ket{000}$.

\end{enumerate}
\end{rem}

Before fully describing the preparation of 3-qubit states, it is convenient to show how the ideas in this paper can be used to prepare any 2-qubit state.

%
% Preparing two qubit states
%
\begin{prop} \label{p2q}Let $\ket{\varphi_0}=\xi_0\ket{00}+\xi_1\ket{01}+\xi_2\ket{10}+\xi_3\ket{11}$ be a 2-qubit state. Denoting $A=\begin{pmatrix}\xi_0&\xi_1\\ \xi_2&\xi_3\end{pmatrix}$ we have that
\begin{enumerate}
\item
If $\det{(A)}=0$, then $(K_1\otimes K_2)  \ket{\varphi_0}=\ket{00}$ where  $K_1=L_1(A)$ with $L_1(A)$ defined in Definition \ref{m}, and $K_2=U(\bar{\eta_0},-\eta_1)^T$ where $\eta_0\ket{00}+\eta_1\ket{01}=(K_1\otimes I) \ket{\varphi_0}$.
\item

If $\det{(A)}\ne 0$, then $(K_1\otimes K_2)cz(I\otimes W_1) \ket{\varphi_0}=\ket{00}$ where $W_1=R_1(A)$ where $R_1(A)$ is defined as in the Definition \ref{m}
and the matrices $K_1$ and $K_2$ are computed as in part (1) using the 2-qubit $cz(I\otimes W_1)\ket{\varphi_0}$.

\end{enumerate}
\end{prop}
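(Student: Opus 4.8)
The plan is to reduce both cases to the 3-qubit circuit machinery the paper is building, but here specialized to 2-qubit states where the matrix $A = T(\ket{\varphi_0})$ plays the role that $T_0$ (or $T_1$) plays for 3-qubit states. The key computational input is Lemma \ref{p3q} restricted to two qubits: applying $I\otimes U$ to $\ket{\varphi_0}$ replaces $A$ by $UA$, and applying $U\otimes I$ replaces $A$ by $AU^T$; applying $cz$ flips the sign of the $(2,2)$ entry of $A$. I would state these three facts explicitly at the start of the proof (they are immediate from the same direct computation behind Lemma \ref{p3q}), since the whole argument is just tracking $A$ through the circuit.

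For part (1), assume $\det(A) = 0$. First I apply $K_1 = L_1(A) = U(\bar v_1, \bar v_2)$ on the first qubit, where both columns of $A$ are multiples of $(v_1,v_2)$. The point of choosing $L_1$ this way is that $K_1 A$ should have its second row equal to zero: indeed $K_1$ applied on the first qubit sends $A \mapsto A K_1^T$, and one checks that the columns of $A$ being proportional to $(v_1,v_2)$ forces the second row of $A K_1^T$ to vanish — this is exactly the design of $U(\bar v_1,\bar v_2)$. (I should double-check the transpose bookkeeping here: on the left-most qubit of a 2-qubit state the rule from Lemma \ref{p3q} is $A \mapsto A K_1^T$, so I need $L_1(A)$ chosen so that $A L_1(A)^T$ has zero second row; the formula $U(\bar v_1,\bar v_2)$ is tailored precisely for that.) So after this step the state is $\eta_0\ket{00} + \eta_1\ket{01}$, i.e. qubit $0$ is now in the fixed state $\ket 0$ and the remaining amplitude lives on qubit $1$. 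Then $K_2 = U(\bar\eta_0,-\eta_1)^T$ applied on the second qubit sends the single-qubit state $\eta_0\ket 0 + \eta_1\ket 1$ to $\ket 0$ (up to the normalization already built into $U$), giving $\ket{00}$. I would verify the last step by a one-line computation: $U(\bar\eta_0,-\eta_1)$ sends $(\eta_0,\eta_1)^T$ to a multiple of $(1,0)^T$, and the transpose accounts for the fact that the second qubit acts on the right by $U^T$.

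For part (2), assume $\det(A)\ne 0$. Apply $W_1 = R_1(A)$ on the second qubit; by Lemma \ref{p3q} this replaces $A$ by $A R_1(A)^T$ — again I must be careful about the transpose, so in fact I want $R_1$ arranged so that after acting on qubit $1$ the matrix satisfies the proportionality relation \eqref{condition} of Lemma \ref{proportionalrows}, perhaps applying $R_1(A^T)$ or $R_1$ on the appropriate qubit. Granting the correct placement, Lemma \ref{proportionalrows} tells us the resulting matrix $W$ has $w_{21} = k w_{11}$ and $w_{22} = -k w_{12}$. Now apply $cz$: this negates $w_{22}$, turning the relation into $w_{21} = k w_{11}$, $w_{22} = k w_{12}$, i.e. the second row of the matrix is exactly $k$ times the first row — so the matrix is singular. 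Hence $cz(I\otimes W_1)\ket{\varphi_0}$ is a 2-qubit state whose associated matrix has determinant zero, and part (1) applies to it, producing $K_1, K_2$ with $(K_1\otimes K_2)\,cz(I\otimes W_1)\ket{\varphi_0} = \ket{00}$.

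The only real obstacle is the transpose/side bookkeeping: Lemma \ref{p3q} says the third-qubit action is by $U^T$ on the right, and in the 2-qubit reduction "qubit 1" is the right factor, so I need to make sure $R_1$ and $L_1$ are fed the matrix (or its transpose) that makes the cited lemmas apply verbatim. Once that is pinned down, everything else is the three elementary transformation rules plus Lemmas \ref{rightfactor}, \ref{proportionalrows} quoted above, with no genuinely new computation. I would organize the write-up as: (i) the three transformation rules; (ii) part (1) via $L_1$ then the single-qubit kill; (iii) part (2) via $R_1$, then $cz$ to force singularity, then invoke part (1).
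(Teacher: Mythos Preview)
Your overall strategy is exactly the paper's: for part (1) use $L_1(A)$ to zero out the second row and then a single-qubit rotation to reach $\ket{00}$; for part (2) use $R_1(A)$ followed by $cz$ to force the matrix to be singular, then invoke part (1). So the approach matches.

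However, the bookkeeping you flagged is in fact reversed. For a 2-qubit state with matrix $A$, the correct rules (the 2-qubit specialization of Lemma \ref{p3q}) are
\[
(U\otimes I)\ket{\varphi_0}\ \leftrightarrow\ UA,\qquad (I\otimes U)\ket{\varphi_0}\ \leftrightarrow\ AU^{T},
\]
not the other way around. With this correction, applying $K_1=L_1(A)$ on the \emph{first} qubit sends $A\mapsto K_1A=L_1(A)\,A$, and since the columns of $A$ are multiples of $(v_1,v_2)$ one checks directly that the second row of $L_1(A)\,A$ vanishes (this is the design of $L_1$, not the transposed version you wrote). Then $(I\otimes K_2)$ sends $B\mapsto BK_2^{T}$, and with $K_2=U(\bar\eta_0,-\eta_1)^{T}$ this collapses the remaining row to the $(1,1)$ entry, giving $\ket{00}$. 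Likewise in part (2), $(I\otimes W_1)$ sends $A\mapsto AW_1^{T}$, so Lemma \ref{proportionalrows} is being applied to produce a matrix whose rows become proportional after the $cz$ sign flip, exactly as you described. Once you swap the two rules, no further changes are needed and your write-up plan (i)--(iii) is the paper's proof.
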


\begin{proof} Let us prove part (1), since $\det{(A)}=0$, then the two columns of $A$ are multiple of each other. We can check that the $(K_1\otimes I) \ket{\varphi_0}=b_{11}\ket{00}+b_{12}\ket{01}+b_{21}\ket{10}+b_{22}\ket{11}$ with the matrix $B=\{b_{ij}\}$ satisfying $B=K_1A$. Since the conjugate of the  first row of $K_1$ is a multiple of both rows of $A$ then, the second row of $B$ vanishes. We can check if  $(K_1\otimes K_2) \ket{\varphi_0}=\gamma_0\ket{00}+\gamma_1\ket{01}+\gamma_2\ket{10}+\gamma_3\ket{11}$ then the matrix 
$C=\begin{pmatrix}\gamma_0&\gamma_1\\ \gamma_2&\gamma_3\end{pmatrix}$ satisfies $C=BK_2^T$ which is equal to the matrix $\begin{pmatrix}1&0\\ 0&0\end{pmatrix}$ because the first column of $K_2^T$ is equal to the conjugate of the first row of $B$. Recall that the second row of $B$ vanishes. To prove part (2)  we notice that if $(I\otimes W_1) \ket{\varphi_0}=b_{11}\ket{00}+b_{12}\ket{01}+b_{21}\ket{10}+b_{22}\ket{11}$, then $B=AW_1^T$. From  Lemma \ref{proportionalrows} we obtain that the two rows fo $cz(I\otimes W_1) \ket{\varphi_0}$ are multiple of each other. Once we know that $\ket{\varphi_1}=cz(I\otimes W_1) \ket{\varphi_0}=c_{11}\ket{00}+c_{12}\ket{01}+c_{21}\ket{10}+c_{22}\ket{11}$ with $\det{(C)}=0$, then the rest of the proof follows from part (1) in this proposition.

\end{proof}

%\begin{rem} The two cases in the previous proposition correspond to the cases $
%\end{rem}

%\begin{pmatrix}\xi_0&\xi_1\\ \xi_2&\xi_3\end{pmatrix}
%%%
%%%
%%%
\subsection{The algorithm that prepares any 3-qubit state} \label{a3q} In this section we  developed  the ideas presented in Remark \ref{iba}.

\begin{enumerate}
\item
Let us start with the 3-qubit state $\ket{\varphi_0}=\ket{0} A_0+\ket{1} B_0$. If $\det{(B_0)}\ne0$ let us define $U_1=U(1,z_0)$ where $z_0$ is a solution of the quadratic equation $\det{(A_0+zB_0)}=0$. If $\det{(B_0)}=0$ then we take $W_1=\begin{pmatrix}0&1\\-1&0\end{pmatrix}$. Using Lemma \ref{p3q} we have that  $\ket{\varphi_1}=(W_1\otimes I\otimes I)\ket{\varphi_0}=\ket{0} A_1+\ket{1} B_1$ satisfies that $\det{(A_1)}=0$. If $A_1$ is the zero matrix then $\ket{\varphi_1}=\ket{1}\otimes \ket{\phi_0}$ and then, $(X\otimes L)\ket{\varphi_1}=\ket{00}$ where $X=\begin{pmatrix}0&1\\ 1&0\end{pmatrix}$ and $L=K_1\otimes K_2$ or $L=(K_1\otimes K_2)cz(I\otimes W_1) $ is the matrix given by Proposition \ref{p2q} for the 2-qubit state $ \ket{\phi_0}$. Therefore, in this case,  the state $\ket{\varphi_0}$ can be prepare with at most one controlled-$Z$ gate. We continue the algorithm  assuming that $A_1$ is not the zero matrix.

\item
 We define $U_2=L_1(A_1)$ and  $\ket{\varphi_2}=(I\otimes U_2\otimes I)\ket{\varphi_1}=\ket{0} A_2+\ket{1} B_1$. We can check that the second row of $A_2$ vanishes.
 \item
 
We define $U_3=R_3(A_2)^T$ and  $\ket{\varphi_3}=(I\otimes I\otimes U_3)\ket{\varphi_2}=\ket{0} A_3+\ket{1} B_3$. We can check that the second row  and the second column of $A_3$  vanish.

\item 

We now assume that $\det{(B_3)}\ne0$, if $\det{(B_3)}=0$ then we can skip this step and the qubit $\varphi_0$ can be prepared with at most two controlled-$Z$ gates. 

We define $U_4=R_1(B_3)^T$ and  $\ket{\varphi_4}=(I\otimes I\otimes U_4^{-1})cz_{01}(I\otimes I\otimes U_4)\ket{\varphi_3}=\ket{0} A_4+\ket{1} B_4$. We can check that the second row   
and the second column of $A_4$  vanish and also, $\det{(B_4)}=0$.

\item 

We now assume that the second column of $B_4$ does not vanish, if it does,  then we can skip this step and the qubit $\varphi_0$ can be prepared with at most two controlled-$Z$ gates. 
We define  $U_5=R_1(B_4)^T$ and  

\begin{eqnarray}
\ket{\varphi_5}&=&cz_{02}(I\otimes I\otimes U_5)\ket{\varphi_4}=\ket{0} A_5+\ket{1} B_5\\
& =& \xi_0\ket{000}+\xi_1\ket{001}+\dots+\xi_7\ket{111}
\end{eqnarray}

 We can check that the rows of  the matrices $A_5$ and $B_5$ are multiple of each other. Using Lemma \ref{rightfactor} to deduce that $\ket{\varphi_5}=\ket{\phi_0}\otimes \ket{\phi_1}$. We can check that we can pick  ${\phi_1}=a_1\ket{0}+a_2\ket{1}$ with $(a_1,a_2)$ the first row of $A_5$ normalized and, if $a_1\ne0$,  $\ket{\phi_0}=b_1\ket{00}+b_2\ket{10}+b_3\ket{11}$  with 

$$b_1=\frac{\xi_0}{a_1},\quad b_2=\frac{\xi_4}{a_1}, \quad b_3=\frac{\xi_6}{a_1}$$, 

If $b_3=0$, then $\ket{\varphi_5}=\ket{\phi_0}\otimes \ket{\phi_1}=(b_1\ket{0}+b_2\ket{1})\otimes \ket{0}\otimes \ket{\phi_0}$ and therefore $K_1\otimes I\otimes K_2\ket{\varphi_5}=\ket{000}$ with  $K_1=U(\bar{a_1},-a_2)^T$ and $K_2=U(\bar{b_1},-b_2)^T$. If $b_3\ne0$ then by  Proposition \ref{p2q} we can find gates $U_6$, $U_7$ and $U_8$ such that $(U_7\otimes U_8)cz(I\otimes U_6)\ket{\phi_0}=\ket{00} $, therefore, if $U_9=U(\bar{a_1},-a_2)^T$ then

$$(U_9\otimes U_7\otimes U_8)cz_{12}(I\otimes I\otimes U_6)\ket{\varphi_5}=\ket{000}$$

\end{enumerate}

\subsection{The algorithm that  prepares any real 3-qubit state} Let us consider any real 3-qubit state $\phi_0=\ket{0}A_0+\ket{1}B_0$. Notice that if $\det(A_0)=0$ or $\det(B_0)=0$ or both determinants  $\det(A_0)$ and $\det(B_0)$ are different from zero and the quadratic equation $\det(A_0+zB_0)=0$ has a real solution, then following the steps in subsection \ref{a3q} give us way to prepare $\ket{\phi_0}$ using real local gates and at most three controlled-$Z$ gates. Notice that if 

$$\ket{\phi_0}=w_0\ket{000}+w_1\ket{001}+w_2\ket{010}+w_3\ket{011}+w_4\ket{100}+w_5\ket{101}+w_6\ket{110}+w_7\ket{111}$$

then, the three cases  (i) $\det(A_0)=0$, (ii) $\det(B_0)=0$ and (iii) both determinants  $\det(A_0)$ and $\det(B_0)$ are different from zero and the quadratic equation $\det(A_0+zB_0)=0$ has a real solution, imply that

$$\Delta(\ket{\phi})=(w_0 w_7-w_1 w_6-w_2 w_5+w_3 w_4)^2-4 (w_1 w_2-w_0 w_3) (w_5 w_6-w_4 w_7)\ge 0$$

Therefore, we conclude that if $\Delta(\ket{\phi})\ge 0$, then $\ket{\phi_0}$ can be prepared with real local gates and at most three controlled-$Z$ gates. We can show that if $\Delta(\ket{\phi})
<0$, then the gate $U_0=R_1(A_0)$ is a real gate and $\ket{\phi_1}=cz_{01}\, (I\otimes I\otimes U_0)\ket{\phi_0}$ can be writes as $\ket{0}A_1+\ket{1}B_1$ with $\det(A_1)=0$. Therefore, applying the procedure in subsection \ref{a3q} we have that $\ket{\phi_1}$ can be taken into the state $\ket{000}$ by using real local gates and at most three controlled-$Z$ gates.

\section{Conclusions}

This paper shows an explicit and optimal way to prepare any 3-qubit state different from the one presented in \cite{Z} by Znidaric, Giraud and Georgeot. It also shows a way to prepare any real 3-qubit state using real local gates and at most 4 controlled-$Z$ gates. Using only real local gates, the question if four  is the least amount of controlled-$Z$ gates needed to prepare any real 3-qubit state remains open. The method presented in this paper heavily use the fact we can write 3-qubit states using two 2 by 2 matrices. This idea was taken from the paper \cite{AA}. All the steps in the algorithm can be view as steps with the purpose of gaining some properties for these two matrices.


\begin{thebibliography}{11}

\bibitem{AA} Ac\'in, Andrianov, Costa, Jan\'e, Latorre, Tarrach, Phys. Lett. A {\bf 85} 1560 (2000), 

%\bibitem{A} M. Aulbach,  quant/ph 1110.5200 (2016)

%\bibitem{B} C.H. Bennett, S. Popescu, D. Rohrlich, J.A. Smolin, A.V. Thapliyal, Phys. Rev. A, {\bf 63}, 012307

% \bibitem{B1} M. Benedetti, D. Garcia-Pintos, O. Perdomo, V. Leyton-Ortega, Y. Nam and A. Perdomo-Ortiz, npj Quantum Information {\bf 5.} 45 (2019).
  
%\bibitem{HW} L. Hardy and W. K. Wootters. Found. Phys. {\bf 42}, 454 (2012). 

%\bibitem{L} V. Leyton-Ortega, A. Perdomo-Ortiz and O. Perdomo, quant/ph 1901.08047

\bibitem{N} Nielsen, M.A.; Chuang, I.L. Quantum Computation and
Quantum Information: 10th Anniversary Edition; Cambridge
University Press: Cambridge, UK, 2010

%\bibitem{P} O. Perdomo, V. Leyton-Ortega, A. Perdomo-Ortiz, quant/ph 1903.01940

\bibitem{P1}  O. Perdomo, Int. J. Quantum Inf. {\bf 18} No 3, 20050004 (2020)

\bibitem{P2}  O. Perdomo, Journal of Physics A: Mathematical and Theoretical. {\bf 54} 465301 (2021)
%Canonical representation of three-qubit states with real amplitudes

%\bibitem{PG} A. P\'erez-Salinas, D. Garc\'ia-Martin, C. Bravo-Prieto and J. I. Latorre, quant/ph 2003.06895

%\bibitem{S} Sudbery,  J. Phys. A: Math. Gen. {\bf 34} 643 (2001)



%\bibitem{SM} J. Schlienz and G. Mahler, Phys. Lett. A {\bf 224} 39 (1996) 

 
 
%\bibitem{W} S. N. Walck, J. K. Glasbrenner, M. H. Lochman and S. A. Hilbert, Phys. Lett. A {\bf 72} 052324 (2005) 
 
\bibitem{Z} M Znidaric, O. Giraud, B. Georgeot  \emph{Optimal number of controlled-NOT gates to generate a three-qubit state} Phys. Rev. A {\bf 77}, 032320 (2008).
 
 \end{thebibliography}
\end{document}